\documentclass[conference,letterpaper]{IEEEtran}

\usepackage[T1]{fontenc}

\IEEEoverridecommandlockouts

\makeatletter
\newcommand\HUGE{\@setfontsize\Huge{40}{50}}
\makeatother    

\usepackage{mathrsfs}
\usepackage{amsfonts}
\usepackage{amsmath}
\usepackage{amssymb}
\usepackage{array}
\usepackage{multirow}
\usepackage{multicol}
\usepackage[dvipsnames]{xcolor}
\usepackage{graphicx}
\usepackage[caption=false]{subfig}
\usepackage{tikz}
\usepackage{pgfplots}
\usepackage{comment}
\usepackage{stackengine}
\usepackage{pifont}
\usepackage{booktabs,multirow}
\usepackage{mathtools}
\usepackage{thmtools}
\usepackage{ntheorem}
\usepackage[acronym, shortcuts]{glossaries}
\usepackage[textsize=tiny]{todonotes}
\usepackage[
  top=0.78in,
  bottom=0.78in,
  left=0.73in,
  right=0.73in
]{geometry}

\usepgfplotslibrary{fillbetween}

\newcommand{\6}{\mathbf }

\usepackage[ruled, vlined, linesnumbered]{algorithm2e}

\SetCommentSty{mycommfont}

\usepackage[backend=bibtex,bibstyle=ieee,citestyle=numeric-comp,hyperref=true]{biblatex}
\newcolumntype{P}[1]{>{\centering\arraybackslash}p{#1}} 

\newcommand{\supp}{\mathrm{Supp}}
\newcommand{\wt}{\mathrm{wt}}

\usepackage{mathabx}
\usepackage{pifont}

\usepackage{booktabs}

\usepackage{array}

\newcommand{\code}{\mathscr C}

\newcommand{\nctable}{\mathtt{NC}\_\mathtt{Syndromes}}

\usetikzlibrary{fit,calc}
\newcommand{\boxit}[2]{
    \tikz[remember picture,overlay] \node (A) {};\ignorespaces
    \tikz[remember picture,overlay]{\node[yshift=3pt,fill=#1,opacity=.25,fit={($(A)+(0,0.15\baselineskip)$)($(A)+(.9\linewidth,-{#2}\baselineskip - 0.25\baselineskip)$)}] {};}\ignorespaces
}

\newcommand{\bfmax}{\textsf{BF-Max}}
\newcommand{\mld}{\textsf{MLD}}
\newcommand{\bfout}{\textsf{OoP}}
\newcommand{\bgf}{\textsf{BGF}}
\newcommand{\newbike}{\textsf{BIKE}\text{-}\textsf{flip}}

\newcommand{\PreserveBackslash}[1]{\let\temp=\\#1\let\\=\temp}
\newcolumntype{C}[1]{>{\PreserveBackslash\centering}p{#1}}
\newcolumntype{R}[1]{>{\PreserveBackslash\raggedleft}p{#1}}
\newcolumntype{L}[1]{>{\PreserveBackslash\raggedright}p{#1}}

\usepackage{arydshln}

\newcommand{\nonl}{\renewcommand{\nl}{\let\nl\oldnl}}

{\theorembodyfont{\rmfamily}
   }
{\theorembodyfont{\rmfamily}
   }
{\theorembodyfont{\rmfamily}
   }
{\theorembodyfont{\rmfamily}
   }
{\theorembodyfont{\rmfamily}
   }
{\theorembodyfont{\upshape}
   }
{\theorembodyfont{\upshape}
   }
{\theorembodyfont{\rmfamily}
   }

\newtheorem{proposition}{Proposition}

\newtheorem{definition}{Definition}

\newtheorem{lemma}{Lemma}

\newtheorem{remark}{Remark}

\newtheorem{example}{Example}

\newacronym{LDPC}{LDPC}{low-density parity-check}
\newacronym{MDPC}{MDPC}{moderate-density parity-check}
\newacronym{QC-MDPC}{QC-MDPC}{quasi-cyclic moderate-density parity-check}
\newacronym{KEM}{KEM}{key encapsulation mechanism}
\newacronym{BF}{BF}{Bit Flipping}
\newacronym{DFR}{DFR}{Decoding Failure Rate}
\newacronym{IND-CCA2}{IND-CCA2}{INDistinguishability under Adaptively Chosen Ciphertext Attacks}
\newacronym{IND-CPA}{IND-CPA}{INDistinguishability under Chosen Plaintext Attacks}
\newacronym{QC}{QC}{quasi-cyclic}

\pgfplotsset{compat=1.17}

\title{Near-Codewords Aware Bit Flipping Decoding of QC-MDPC Codes\thanks{The work of Alessio Baldelli was partially supported by Agenzia per la Cybersicurezza Nazionale (ACN) under the programme for promotion of XL cycle PhD research in cybersecurity (CUP I32B24001750005). 
The work of Marco Baldi and Paolo Santini was partially supported by the Italian Ministry of University and Research (MUR) under the Italian Fund for Applied Science (FISA) 2022, Call for tender No. 1405 published on 13/09/2022 - project title “Quantum-safe cryptographic tools for the protection of national data and information technology assets” (QSAFEIT) - No. FISA 2022-00618 (CUP I33C24000520001), Grant Assignment Decree no. 15461 adopted on 02/08/2024.
}}

\author{
\IEEEauthorblockN{
\begin{tabular}{c}
Alessio Baldelli, Marco Baldi, Davide De Zuane, and Paolo Santini  \\
Department of Information Engineering,
Università Politecnica delle Marche, Ancona, Italy \\
\texttt{a.baldelli@pm.univpm.it}, \texttt{davide.dezuane@imtlucca.it}, \\
\{\texttt{m.baldi}, \texttt{p.santini}\}\texttt{@univpm.it}. 
\end{tabular}
}
}

\begin{document}

\maketitle

\begin{abstract}
Bit-Flipping (BF) decoders are a family of decoders widely employed in post-quantum cryptographic schemes based on Quasi-Cyclic Moderate-Density Parity-Check (QC-MDPC) codes, such as BIKE. BF decoders suffer from trapping sets, corresponding to low-weight error patterns that likely lead to decoding failures. For QC-MDPC codes, the most relevant family of trapping sets is that of \emph{near-codewords}, which are error patterns associated to low-weight syndromes. Indeed, recent works show that error patterns having a large overlap with near-codewords are the main culprits for decoding failures at very low Decoding Failure Rate (DFR) values. In this paper, we show that any BF decoder can be tweaked and made somehow aware of near-codewords, which means being able to recognize, and recover from, bad configurations due to near-codewords.  We show that this modification results in minimal computational overhead. Through intensive numerical simulations, we evaluate the effectiveness of this approach on several BF decoders, considering both toy code parameters and BIKE parameters for NIST security category 1. Our results show drastic reductions in the DFR. We also find that, with this modification, a recently proposed BF variant called $\bfmax$ outperforms the two decoders used by BIKE within the NIST competition.
\\
\textit{Index Terms}—Bit-flipping decoder, code-based cryptography, decoding failure rate, trapping sets, LDPC codes, MDPC codes.
\end{abstract}

\section{Introduction}
\label{sec:intro}

One of the most promising ways to build post-quantum Key Encapsulation Mechanisms from codes is to rely on the McEliece/Niederreiter framework instantiated with Quasi Cyclic Moderate-Density Parity-Check (QC-MDPC) codes \cite{misoczki2013mdpc, aragon2020bike, leda}.
In such schemes, decapsulation boils down to decoding a low-weight error through the underlying QC-MDPC code.
Among the available decoders, the Bit Flipping (BF) algorithm is normally the preferred choice, because of its excellent trade-off between simplicity, computational efficiency and error correction capability
\cite{drucker2020qc, sendrier2019decoding, vasseur2021thesis, leda, aragon2020bike}. 
As a matter of fact, BIKE (one of the finalists in the NIST competition for the standardization of post-quantum cryptography 
\cite{alagic2025status}) is based on QC-MDPC codes and relies on the BF decoder. 

However, BF decoders are characterized by a non-zero Decoding Failure Rate (DFR). 
This is an issue, since \emph{decoding failures} lead to \emph{decryption failures} that, in turn, are associated with some information leakage about the secret key.
This leads to practical attacks \cite{guo2016key} and ultimately hinders achieving INDistinguishability under Adaptively Chosen Ciphertext Attacks (IND-CCA2).
To achieve IND-CCA2, the DFR of the employed decoder must be \emph{negligible} in the security parameter: for instance, for $128$ bits of security, we need $\text{\ac{DFR}} < 2^{-128}$ \cite{hofheinz2017modular}. 
For this reason, studying and modeling the DFR of QC-MDPC codes at very low values has become highly relevant \cite{annechini2024bitflipping, baldelli2025bf, arpin2025error, arpin2022study, vasseur2021thesis, sendrier2020low, tillich2018decoding}.

 
For the 
QC-MDPC matrices used in BIKE and other cryptographic schemes, consisting of two circulant blocks, some works have identified a family of trapping sets, called \textit{near-codewords}, which are formed by a column of one of the circulant blocks themselves \cite{baldi2021performance, arpin2022study, vasseur2021thesis}. 
Indeed, whenever the error pattern has a large overlap with a near-codeword, the decoder is more likely to get fooled and to converge to the near-codeword.
This situation is analyzed accurately in  \cite{arpin2025error}; the results in the paper imply that, at very low DFR values, near-codewords are the most meaningful events of failure.


\paragraph*{Our contribution}
In this paper, we show how the knowledge of near-codewords for one QC-MDPC code can be employed to improve BF decoding of the same code.
The intuition is the following: when the initial error pattern overlaps with a near-codeword in many positions (we refer to such errors as \textit{almost near-codewords}), the decoder is likely to get stuck in a bad configuration in which the errors to be corrected correspond exactly to the near-codeword.
Since near-codewords are known, their syndromes are known as well: by observing the syndrome, the bad configuration can be recognized and the residual errors can be corrected accordingly. 
This can be done efficiently by reading from a look-up table where syndromes of near-codewords are stored. 
The cost of accessing the table can be made very small, and much less than the cost of running the BF decoder, using proper data structures (e.g., sorting the syndromes and accessing the list via binary searches).
In the end, this modification does not lead to any relevant computational overhead.
Moreover, this improvement can be applied to \emph{any} BF decoder as it is entirely agnostic to what the decoder does precisely (e.g., in-place vs out-of-place flipping strategy, how many iterations are performed, how many bits are flipped in each iteration, how bits to be flipped are selected, etc.).

To validate our approach, we consider several BF decoders from the literature and show that, for each of them, the DFR is always reduced.
First, we focus on codes with toy parameters, which allow us to verify the impact on the DFR in the error floor region.
Then, we consider the actual BIKE code parameters for NIST security category 1.
In this case, we are obviously not able to simulate the error floor, but we verify the validity of our idea by simulating those errors having large intersections with near-codewords.
Moreover, our results show that \bfmax, a recently proposed step-by-step decoder \cite{baldelli2025bf},  outperforms all the other decoders regarding decoding of almost near-codewords.
Remarkably, in all our simulations, the modified $\bfmax$ has never encountered a decoding failure.

\section{Notation and Background}
\label{sec:notation}

We denote with $\mathbb{F}_2$ the binary field.
Vectors and matrices are indicated with bold lowercase and uppercase letters, respectively.
For a vector $\6a$, $a_i$ indicates its $i$-th entry; analogously, for a matrix $\6A$, $a_{i,j}$ indicates the entry in the $i$-th row and $j$-th column.
The all-zero vector of length $r$ is denoted by $\60_r$.
The support of a vector $\6a$, which is the set of indexes of its non-zero entries, is $\supp(\6a)$, while $\mathrm{wt}(\6{a})$ denotes its \emph{Hamming weight}.
For a vector $\6a\in\mathbb F_2^n$ and a set $J\subseteq\{0,\cdots,n-1\}$, $\6a|_J$ stands for the vector formed by the entries of $\6a$ which are indexed by $J$.
A square matrix is said to be \textit{circulant} if every row can be obtained by applying a cyclic shift to the previous row; the same holds for the columns.
As a consequence, in a circulant matrix, all rows and columns have the same Hamming weight; we will sometimes refer to it as \textit{circulant weight}.
The set of $r \times r$ binary circulant matrices equipped with the standard matrix sum and multiplication forms a ring which is isomorphic to $\mathbb F_2[x]/(x^r+1)$.
Namely, let $(a_0,\cdots,a_{r-1})$ be the first row of a circulant matrix: the associated polynomial is $\sum_{i = 0}^{r-1}a_ix^i \in \mathbb F_2[x]/(x^r+1)$.
We will be rather flexible with the notation and switch between the matrix and polynomial notations from time to time.

\subsection{Linear codes and QC-MDPC codes}

A binary \emph{linear-code} $\code \subseteq \mathbb F_2^n$ with length $n$ and redundancy $r \leq n$ is a linear subspace of $\mathbb F_2^n$ with dimension $k = n - r$.
Any linear code can be represented with a \emph{parity-check matrix}, that is, an $r\times n$ full-rank matrix such that 
$\code \coloneqq \left\{\6c\in\mathbb F_2^n\mid \6c\6H^\top = \60\right\}.$
Given $\6e \in \mathbb{F}_2^n$, its associated \textit{syndrome} is $\6s = \6e\6H^\top \in \mathbb{F}_2^r$.
If $\6s \neq \60_r$, then $\6e \not \in \code$ otherwise, if $\6s = \60_r$, then $\6e \in \code$.
\emph{\Ac{LDPC}} codes are a family of error-correcting codes represented by \emph{sparse} parity-check matrices, while
\ac{MDPC} codes are \ac{LDPC} codes with somewhat less sparse parity-check matrices\footnote{Some authors actually distinguish \ac{LDPC} codes from \ac{MDPC} codes by saying that parity-check equations for \ac{LDPC} codes have weight which grows as $O\big(\log(n)\big)$, while for \ac{MDPC} codes it grows as $O\big(\sqrt{n}\big)$ \cite{ouzan2009, misoczki2013mdpc,chiaraluce_2007}.}.
\ac{MDPC} and \ac{LDPC} codes share many structural properties and can be decoded with the same algorithms.
A code is said to be QC if it is invariant under block-wise cyclic shifts.
We introduce next the \ac{QC} codes we consider in this paper.
\begin{definition}\label{def:qc_mdpc}
Let $\6H = (\6H_1, \6H_2)$, with $\6H_1$ and $\6H_2$ being $r \times r$ binary circulant matrices. 
Let $\code$ be the code having $\6H$ as parity-check matrix. 
Then, we say that $\code$ is a \emph{double circulant code}.
\end{definition}
When both $\6H_1$ and $\6H_2$ have weight $v\ll r$, then we speak of \textit{double circulant QC-MDPC codes}.
We will sometimes write $\6H = (h_1, h_2)$, where $h_1, h_2\in\mathbb F_2[x]/(x^r+1)$ are the polynomials associated with $\6H_1^\top$ and $\6H_2^\top$, respectively.
This becomes rather handy when expressing the syndrome as a polynomial: for a certain $\6e = (\6e_1, \6e_2)\in\mathbb F_2^{2r}$, if we denote by $e_1$ and $e_2$ the polynomials associated to $\6e_1$ and $\6e_2$, respectively, we can associate the syndrome $\6s = \6e\6H^\top$ to the polynomial
$s = e_1\cdot h_1 + e_2 \cdot h_2$.

\subsection{Bit Flipping decoding}

The simple idea that underlies every BF decoder is to iteratively flip the bits that are more likely to be in error, based on the so-called \textit{counters}.
\begin{definition}
Let $\6H\in\mathbb F_2^{r\times n}$ be a parity-check matrix and $\6s\in\mathbb F_2^r$ a 
syndrome.
Then, for $i\in\{0,\cdots,n-1\}$, the $i$-th \emph{counter} is defined as
$$\sigma_i = \big| \left\{\left.j\in\{0,\cdots,r-1 \right\} \, | \, (h_{j,i} = 1)\wedge (s_j = 1)\right\} \big|.$$
Equivalently, let $\6h_i$ indicate the $i$-th column of $\6H$; then, the $i$-th counter is defined as
$\sigma_i = \wt\big(\6s\mid_{\supp(\6h_i)}\big)$.
\end{definition}
Counters can be used to locate errors: if $\sigma_i$ is large, then it is likely that $e_i = 1$.
This is the basis of any BF decoder.
In Algorithm \ref{alg:bf_general}, we sketch the blueprint of a BF decoder.
As we can see, counters are crucial to determine error positions.
Later on, we will describe which set $J$ of large counters is considered in several BF variants.

\begin{algorithm}[t]
\small{\KwData{parity-check matrix $\6H\in\mathbb F_2^{r\times n}$, maximum number of iterations $\mathtt{IterMax}\in\mathbb N$}
\KwIn{syndrome $\6s \in \mathbb F_2^r$}
\KwOut{decoding failure $\bot$, or vector $\widehat{\6e}\in\mathbb F_2^n$ if $\6s = \6H\widehat{\6e}^\top$}

\tcc{Initialize error estimate and iterations}
Set $\widehat{\6e} = \6{0}_n$, \quad $\mathtt{Iter} = 0$\;
\While{$\big(\6s\neq \60\big)\wedge\big(\mathtt{Iter}<\mathtt{IterMax}\big)$}{
Compute counters $(\sigma_0, \cdots, \sigma_{n-1})$\;

\tcc{Find positions that are likely to be wrongly estimated}
Determine $J\subseteq\{0,\cdots,n-1\}$ so that, for each $i\in J$, $\sigma_i$ is large enough\;

\tcc{Update error estimate and syndrome}
\For{$i\in J$}{
$\widehat e_{i}\gets \widehat e_{i} + 1$, \quad 
$\6s\gets \6s + \6h_{i}$\;
}
\tcc{Update the number of iterations}
$\mathtt{Iter}\gets \mathtt{Iter}+1$\;
}

\tcc{Return failure if syndrome is not null}
\textbf{if} $\6s\neq \60$ return $\bot$, \textbf{else} return $\widehat{\6e}$\;

\caption{\textsf{BF}\label{alg:bf_general}
}
}
\end{algorithm}

We finally recall a useful approximation regarding counters. 
This is related to the elements of the adjacency matrix $\6A\in\mathbb N^{n\times n}$, introduced in \cite{santini_hard}, which is defined as follows
$$
a_{i,j} = \begin{cases}
0 &\text{if $i = j$},\\
\left|\supp(\6h_i)\cap \supp(\6h_j)\right| &\text{otherwise.}
\end{cases}
$$
As shown in \cite{santini_hard}, due to the sparsity of $\6H$, for a random error vector $\6e\in\mathbb F_2^n$ with low weight we have
\begin{equation}
\label{eq:counters_approx}
\sigma_i \approx \begin{cases}
v - \sum_{j\in\supp(\6e)\setminus\{i\}}a_{i,j} & \text{if $i\in \supp(\6e)$,}\\
\sum_{j\in\supp(\6e)}a_{i,j} & \text{if $i\not\in \supp(\6e)$.}
\end{cases}
\end{equation}

\subsection{Near-codewords}

We first recall the definition of near-codewords \cite{vasseur2021thesis}.
\begin{definition}
Let $\6H = (h_1, h_2)$ be the parity-check matrix of a double circulant code and $J = \{ 0, \dots, r-1 \}$, then the set of near-codewords is 
$\mathcal{E} = \left\{(x^i\cdot h_1, 0)\mid i \in J \right\} \, \, \cup
\left\{(0,x^i\cdot h_2)\mid i \in J \right\}.$
\end{definition}
Observe that, for every element in $\mathcal E$, the syndrome has weight $v$, since $\mathrm{wt}(h_i^2)=v$.
We generalize the set of near-codewords by introducing \textit{almost near-codewords}, i.e., errors characterized by some intersection with near-codewords.
As shown in \cite{arpin2025error}, when such an intersection is large enough, this type of error patterns become the most meaningful events of failure at very low DFR values.
We formalize this family of error patterns using the parameters $t, u \in \mathbb{N}$, as follows.

\begin{definition} \label{def:almost}
For a parity-check matrix $\6H$ for which $\mathcal{E}$ is the set of near-codewords, let $\mathcal{E}_{t,u} \subseteq \mathbb{F}_2^n$ be the set of $t$-weight vectors such that, for each $\6e\in\mathcal E_{t,u}$, it holds that
$\max_{\6m\in\mathcal E}\left\{|\supp(\6e)\cap \supp(\6m)|\right\} = u$.
We refer to elements of $\mathcal E_{t,u}$ as $(t,u)$\emph{-almost near-codewords}.
\end{definition}

\begin{remark}
    For a $(t,u)$-almost near-codeword $\6e$ there exists at least a near-codeword $\6m \in \mathcal E$ which overlaps with $\6e$ in \emph{exactly} $u$ ones; for all the other near-codewords, the overlapping ones with $\6e$ are \emph{at most} $u$.
\end{remark}
Note that, for a double-circulant code, every error vector of weight $t$ is a $(t,u)$-almost near-codeword for some $u\geq 1$.
\begin{definition}
Let $\6e$ be a $(t,u)$-almost near-codeword and $\6m\in\mathcal E$ such that $|\supp(\6e)\cap \supp(\6m)| = u$. Then, we say that $\6m$ is a \emph{close near-codeword} for $\6e$.
\end{definition}
We recall the notion of \textit{bad bits} and \textit{suspicious bits} from \cite{arpin2025error}. 
\begin{definition} 
Let $\6e\in \mathcal E_{t,u}$ be a $(t,u)$-almost near-codeword. 
Then, we say that 
\begin{itemize}
\item the $j$-th bit is a \emph{bad bit} if $j\in \supp(\6e)$ and there is a close near-codeword $\6m$ for $\6e$ such that $j\in\supp(\6m)\cap\supp(\6e)$;
\item the $j$-th bit is a \emph{suspicious bit} if $j\not\in\supp(\6e)$ and there is a close near-codeword $\6m$ for $\6e$ such that $j\in\supp(\6m)\setminus \supp(\6e)$.
\end{itemize}
\end{definition}
When for a $(t,u)$-almost near-codeword $\6e$ there is only \emph{one} close near-codeword, the corresponding number of bad and suspicious bits is exactly $u$ and $v-u$, respectively. 
If there are more close near-codewords, then these amounts may be larger.
For large $u$, however, there is a unique close near-codeword for each $(t, u)$-almost near-codeword with large probability.

\section{Correcting near codewords by matching syndromes}
\label{sec:lookup}

We first provide a simple explanation of why $(t,u)$-almost near-codewords with a large $u$ are likely to make the BF decoder converge to their close near-codewords\footnote{Notice that this phenomenon is already explained in \cite{arpin2025error}; however, we give a much simpler justification, using the approximation in \eqref{eq:counters_approx}.
This is helpful for understanding our modification to BF decoders}.
Then, we describe our tweak for BF decoders and we evaluate the benefit it brings in terms of performance.

\subsection{Counters for almost near-codewords}

Let us consider a double circulant QC-MDPC code and decoding of a $(t,u)$-almost near-codeword $\6e$. 
We first recall the following useful lemma from \cite{baldi2021performance}.
\begin{lemma} \label{lem:circ}
Let $\6H\in\mathbb F_2^{r\times 2r}$ be a double circulant matrix.
Let $i\neq j$, $i,j\in\mathrm{Supp}(\6h_1)$: then, $a_{i,j}\geq 1$, i.e., columns $i$ and $j$ intersect in at least one position.
Analogously, if $i\neq j$, $i,j\in\mathrm{Supp}(\6h_2)$, then $a_{r+i,r+j}\geq 1$, i.e., columns $r+i$ and $r+j$ intersect in at least one position.
\end{lemma}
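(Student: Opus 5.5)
The plan is a direct computation in the polynomial representation of circulant matrices. I will interpret $\mathrm{Supp}(\6h_1)$ as the support of the first column of $\6H_1$ (equivalently, the support of the polynomial $h_1$ associated with $\6H_1^\top$), with indices read modulo $r$. The argument for $\6H_2$ is identical after shifting column indices by $r$, so I will treat only the first block.

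\textbf{Step 1: describe every column of $\6H_1$.} Let $\6c$ be the first column of $\6H_1$, so $\mathrm{Supp}(\6c)=\mathrm{Supp}(\6h_1)$. Since $\6H_1$ is circulant, column $i$ is the cyclic shift of $\6c$ by $i$ positions. Concretely, its entry in row $k$ is $c_{(k-i)\bmod r}$. Hence
$$\mathrm{Supp}(\6h_i)=\{k : (k-i)\bmod r\in \mathrm{Supp}(\6h_1)\}=\mathrm{Supp}(\6h_1)+i \pmod r.$$
In polynomial terms, this says the $i$-th column corresponds to $x^i h_1$.

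\textbf{Step 2: exhibit an explicit common row.} For $i\neq j$ with $i,j\in\mathrm{Supp}(\6h_1)$, take $k=(i+j)\bmod r$. Then $k-i\equiv j$ and $k-j\equiv i$, and both lie in $\mathrm{Supp}(\6h_1)$ by hypothesis. So $k\in\mathrm{Supp}(\6h_i)\cap\mathrm{Supp}(\6h_j)$, which gives $a_{i,j}\geq 1$. Equivalently, the equation $a+i\equiv b+j$ with $a,b\in\mathrm{Supp}(\6h_1)$ is solved by the "swapped" choice $a=j$, $b=i$.

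\textbf{Step 3: the second block.} For $\6H_2$ the columns of $\6H$ are indexed $r+i$, and column $r+i$ is the $i$-th column of $\6H_2$. The same computation with $h_2$ in place of $h_1$ gives the common row $(i+j)\bmod r$ for columns $r+i$ and $r+j$.

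\textbf{Main obstacle.} The only delicate point is the indexing convention: shift by $+i$ versus $-i$, and whether $\mathrm{Supp}(\6h_1)$ refers to the first row or the first column. I would fix the convention so that "$i\in\mathrm{Supp}(\6h_1)$" means $i$ indexes the near-codeword's positions, consistent with the definition of $\mathcal E$. If the opposite convention is in force (columns correspond to $x^{-i}h_1$, or the support is read from the first row), the same swap trick works with $k\equiv -(i+j)$, or with the support negated. I would check this against the definition $s=e_1h_1+e_2h_2$ to pick the right sign.
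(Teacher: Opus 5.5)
The paper does not prove this lemma; it quotes it from \cite{baldi2021performance}. Your argument is therefore a self-contained proof rather than a reconstruction of one in the paper, and it is correct.

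Under the paper's convention, $h_1$ is the polynomial of $\6H_1^\top$, so column $i$ of $\6H_1$ is $x^i h_1$. Your common row $x^{i+j}$ is then the standard witness. It comes from the same pairing $(a,b)\leftrightarrow(b,a)$ of exponents in $S=\mathrm{Supp}(h_1)$ that makes every cross term of $h_1^2=\sum_{a,b\in S}x^{a+b}$ cancel. That cancellation is why near-codeword syndromes have weight $v$.

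There is one small slip in your closing remark on conventions. In both alternative readings you list (columns $x^{-i}h_1$, or support read from the first row), $i$ and $j$ index ones in the first row of $\6H_1$. So the common row is simply row $0$, and the statement becomes trivial. Your proposed fix $k\equiv-(i+j)$ does not work in general: for columns $x^{-i}h_1$ with $i,j\in S$, that row would require $-i,-j\in S$. This does not affect your main argument, which is carried out in the correct convention.
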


\paragraph*{Almost near-codewords with $t = u$}
When $t = u$, all the error bits in $\6e$ are actually bad bits.
Recalling \eqref{eq:counters_approx}, each counter for a bad bit $i$ is approximately $v-\sum_{j\in\supp(\6e)\setminus \{i\}}a_{i,j}$.
Thanks to the sparsity of $\6H$, the majority of the entries $a_{i,j}$ is actually $1$, hence the counter associated to each bad bit is expected to be close to $v-(t-1) = v-u+1$.
For the same reasons, it is possible to approximate the counter value for each suspicious bit as $\sum_{j\in\supp(\6e)}a_{i,j}\approx t = u$.
In Fig.~\ref{fig:counters_distrib}, we show a numerical validation of this approximation.
When $u$ is large enough, say, $u>v/2$, we have that counters for bad bits are expected to be below $v/2$ and counters for suspicious bits exceed $v/2$.
Thus, any BF decoder is going to get fooled and, with high probability, will actually choose the bits to be flipped among the suspicious bits.
This explains the convergence to a close near-codeword.

\begin{figure}[t]
\centering

\begin{minipage}{0.48\linewidth}
    \centering
    \subfloat[$r = 2003$, $t = u = 3$]{%
        \resizebox{\linewidth}{!}{%
            \begin{tikzpicture}
\begin{axis}[
  width=6cm,
  height=4cm,
  ymin=0, ymax=1,
  xmin=0, xmax=15,
  grid = both,
  xtick = {0,3,6,9,12, 15},
  ytick = {0, 0.25, 0.5, 0.75, 1},
  tick label style={font=\normalsize},
  legend columns=-1,
  legend entries={aaa, bbb},
  legend to name=named,
]

\addplot[ybar,bar width=5pt,fill=red!60,opacity=0.6]
table[col sep=comma,x index=0,y index=1] {Data/pr_counters_p2003_v15_u3_t3.txt};
  
\addplot[ybar,bar width=5pt,fill=blue!80,opacity=0.6]
table[col sep=comma,x index=0,y index=2] {Data/pr_counters_p2003_v15_u3_t3.txt};

\addplot[red!60, densely dashdotted, opacity = 0.9, thick]coordinates
{(3, 0)
(3, 1)};

\addplot[blue!80, densely dashdotted, opacity = 0.9, thick]coordinates
{(13, 0)
(13, 1)};

\node[red!60, opacity = 0.9, thick](u) at (axis cs: 2, 1.1)   {};



\coordinate (A) at (axis cs: 3, 1);%
\coordinate (B) at (axis cs: 13, 1);%

\end{axis}

\node[red!60, opacity = 0.9, above](pu) at (A)   {$u$};

\node[blue!80, opacity = 0.9, above](pu) at (B)   {$v-u+1$};

\end{tikzpicture}%
        }%
    }
    \label{fig:counters_3}
\end{minipage}
\hfill
\begin{minipage}{0.48\linewidth}
    \centering
    
\subfloat[$r=2003$, $t=u=13$]{%
        \resizebox{\linewidth}{!}{%
            \begin{tikzpicture}
\begin{axis}[
  width=6cm,
  height=4cm,
  ymin=0, ymax=1,
  xmin=0, xmax=15,
  grid = both,
  xtick = {0,3,6,9,12, 15},
  ytick = {0, 0.25, 0.5, 0.75, 1},
  tick label style={font=\normalsize},
  legend columns=-1,
  legend entries={aaa, bbb},
  legend to name=named,
]

\addplot[ybar,bar width=5pt,fill=red!60,opacity=0.6]
table[col sep=comma,x index=0,y index=1] {Data/pr_counters_p2003_v15_u13_t13.txt};
  
\addplot[ybar,bar width=5pt,fill=blue!80,opacity=0.6]
table[col sep=comma,x index=0,y index=2] {Data/pr_counters_p2003_v15_u13_t13.txt};

\addplot[red!60, densely dashdotted, opacity = 0.9, thick]coordinates
{(13, 0)
(13, 1)};

\addplot[blue!80, densely dashdotted, opacity = 0.9, thick]coordinates
{(3, 0)
(3, 1)};

\node[red!60, opacity = 0.9, thick](u) at (axis cs: 13, 1.1)   {};



\coordinate (A) at (axis cs: 13, 1);%
\coordinate (B) at (axis cs: 3, 1);%

\end{axis}

\node[red!60, opacity = 0.9, above](pu) at (A)   {$u$};

\node[blue!80, opacity = 0.9, above](pu) at (B)   {$v-u+1$};

\end{tikzpicture}%
        }%
    \label{fig:counters_13}
}

\end{minipage}

\caption{Experimental distribution of counters for bad bits (in blue) and suspicious bits (in red), for codes with $r = 2003$ and different circulant weights.
For each configuration, we have generated uniformly at random $100$ $(t,u)$-almost near-codewords.}
\label{fig:counters_distrib}
\end{figure}
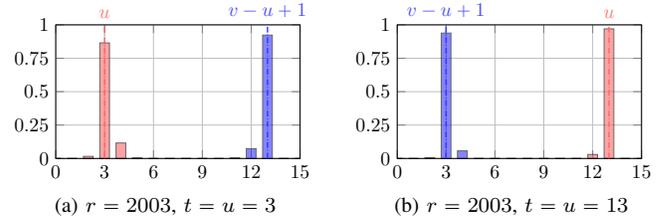



\subsubsection*{Almost near-codewords with $t > u$}

Now, with respect to the previous scenario, we have to deal with a larger number of errors.
The decoder will correct errors that are not bad bits with large probability, since their counters are expected to be large.
Analogously, error-free positions that are not suspicious bits should have low counters, hence the decoder will not flip them, with large probability.
So, again, when $u$ is large enough, with large probability the decoder will converge to a close near-codeword.
Actually, in this case we expect that convergence may start even when $u < v/2$, since the extra errors are going to worsen the counters behavior, overall.

\subsection{Recognizing syndromes of near-codewords}

In Algorithm~\ref{alg:bf_lookup}, we show how a BF decoder can be modified to exploit knowledge of near-codewords.
Observe that the only modification corresponds to instructions 7--10.

To do this, we prepare a look-up table as follows
$$\nctable =  \left\{ \big(\6m_i\6H^\top, i\big) \, | \, i = 0,\cdots,2r-1 \right\},$$
which contains the syndromes of the near-codewords (denoted as $\6m_0, \dots,\6m_{2r-1}$), together with the associated indexes.
In polynomial notation, we have
$$\6m_i\6H^\top = \begin{cases}
x^i\cdot h_1^2 &\text{if $i<r$,}\\
x^{i-r}\cdot h_2^2 &\text{if $i\geq r$.}\\
\end{cases}$$
At the end of each iteration, the decoder checks if the updated syndrome coincides with one of the syndromes stored in the look-up table.
Whenever this happens, the decoder recognizes that it got stuck in a near-codeword and recovers from this bad situation by flipping the bits corresponding to the near-codeword.
\begin{algorithm}[t]
\small{\KwData{parity-check matrix $\6H\in\mathbb F_2^{r\times n}$, maximum number of iterations $\mathtt{IterMax}\in\mathbb N$, table $\nctable$}
\KwIn{syndrome $\6s \in \mathbb F_2^r$}
\KwOut{decoding failure $\bot$, or vector $\widehat{\6e}\in\mathbb F_2^n$ if $\6s = \6H\widehat{\6e}^\top$}
\vspace{1mm}

\tcc{Initialize error estimate and iterations}
Set $\widehat{\6e} = \6{0}_n$, \quad $\mathtt{Iter} = 0$\;
\While{$\big(\6s \neq \60_r \big)\vee\big(\mathtt{Iter}<\mathtt{IterMax}\big)$}{
Compute counters $(\sigma_0, \cdots, \sigma_{n-1})$\;

\tcc{Find positions that are likely to be wrongly estimated}
Determine $J\subseteq\{0, \cdots, n-1\}$ so that, for each $i\in J$, $\sigma_i$ is large enough\;

\tcc{Update error estimate and syndrome}
\For{$i\in J$}{
$\widehat e_{i}\gets \widehat e_{i} + 1$, \quad $\6s\gets \6s + \6h_{i}$\;
}

{\nonl\boxit{Gray!60!White}{4.8}}
\tcc{Scan the syndromes of near-codewords}
\If{$\wt(\6s) = v$}{
\If{$\nctable$ \emph{contains} $\6s$ \emph{at position} $i$} 
{
Update $\widehat{\6e}$ as $\widehat{\6e} + \6m_i$\;
\Return $\widehat{\6e}$\;
}
}

\tcc{Update the number of iterations}
$\mathtt{Iter}\gets \mathtt{Iter}+1$\;

}
\tcc{Return failure if syndrome is not null}
\textbf{if} $\6s \neq \60$ return $\bot$, \textbf{else} return $\widehat{\6e}$\;
\caption{\textsf{BF} with knowledge of near-codewords}
\label{alg:bf_lookup}
}
\end{algorithm}
Since the syndrome of a near-codeword has exactly weight $v$, we access the table only when the weight of the updated syndrome is exactly $v$.
This avoids useless accesses to the look-up table.
Moreover, as we show with the next proposition, the time and the space overheads are very mild.
\begin{proposition}
    Storing the table $\nctable$ requires $O\big( r\cdot v\cdot\log_2(r) \big)$ memory, while each access to the table requires  
    $O\big( \log^2_2(r) \cdot v \big)$ time.
\end{proposition}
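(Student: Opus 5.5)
The plan is to fix a concrete representation of the table and then count bits and comparisons. Each entry of $\nctable$ is a pair $(\6m_i\6H^\top, i)$, where $\6m_i\6H^\top$ equals $x^i h_1^2$ or $x^{i-r}h_2^2$. As observed after the definition of near-codewords, these syndromes have weight exactly $v$. So I will not store each syndrome as a dense vector of $r$ bits. Instead, I will store its support: a sorted list of $v$ indices in $\{0,\dots,r-1\}$, each encoded with $\lceil\log_2 r\rceil$ bits. The index $i\in\{0,\dots,2r-1\}$ adds a further $\lceil\log_2(2r)\rceil = O(\log_2 r)$ bits.

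\textbf{Memory.} Each entry costs $v\lceil\log_2 r\rceil + O(\log_2 r) = O(v\log_2 r)$ bits. With $2r$ entries, the total is $O(r\cdot v\cdot\log_2 r)$, which is the first claim. Sorting the $v$ indices inside each support makes each syndrome a canonical word. The entries are then sorted lexicographically by this canonical word, which is a one-time preprocessing step and does not count toward the access cost.

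\textbf{Access time.} A lookup is performed only when $\wt(\6s)=v$, as in Algorithm~\ref{alg:bf_lookup}. In that case we extract the support of $\6s$ as a sorted list of $v$ indices and run a binary search over the $2r$ sorted entries. The search takes $\lceil\log_2(2r)\rceil = O(\log_2 r)$ steps. Each step lexicographically compares two lists of $v$ indices of $O(\log_2 r)$ bits each. Counting bit operations, one comparison costs $O(v\log_2 r)$, so an access costs $O(\log_2 r)\cdot O(v\log_2 r)=O(\log_2^2(r)\cdot v)$, matching the statement. If there is a match, the returned index $i$ identifies $\6m_i$, and the update $\widehat{\6e}+\6m_i$ involves only $v$ positions, so it is dominated by the search.

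\textbf{Main obstacle.} The delicate point is the cost model, not the combinatorics. The bound $\log_2^2 r$ only holds if we count bit operations on $\log_2 r$-bit indices, so I will state that model explicitly. I also need to make sure that producing the sparse sorted representation of $\6s$ is not charged to the table access. This is justified because the decoder already maintains $\6s$ and checks $\wt(\6s)=v$. Alternatively, that representation is a byproduct of computing the weight, and its cost is absorbed by the BF iteration, which already costs $\Omega(r)$.
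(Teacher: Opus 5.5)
Your proposal is correct and follows the paper's argument: store each weight-$v$ syndrome by its support ($v\log_2 r$ bits per entry, $2r$ entries), sort the entries, and locate $\6s$ by binary search with $\log_2(2r)$ comparisons of $v\log_2 r$-bit objects. Your extra points are sensible clarifications that the paper leaves implicit: the explicit bit-operation cost model, and the note that extracting the support of $\6s$ is absorbed into the BF iteration rather than charged to the table access.
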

\begin{IEEEproof}
Since each syndrome in the table has support size $v$, we can store it using the elements of its support, with only $v\cdot\log_2(r)$ bits.
The syndromes can then be sorted, so that accessing the table can be done with a binary search algorithm.
This, in the worst case, requires $\log_2(2r)$ comparisons between objects having binary length $v\cdot\log_2(r)$.
\end{IEEEproof}
\begin{remark}
QC-MDPC codes used in cryptography have $v = O(\sqrt{r})$, which leads to BF decoding having a time complexity of $O(r^{1.5})$.
This is much more than the overhead due to reading from the look-up table, which is 
$O\big(\sqrt{r}\cdot \log_2^2(r)\big)$.
\end{remark}

\section{Numerical results}
\label{sec:bike}

In this section, we show that our modification effectively leads to significant reductions in the DFR.
To this end, we performed intensive numerical experiments considering several BF decoders.
All code and data used in our simulations have been made publicly available to ensure the reproducibility of our results\footnote{\url{https://github.com/secomms/BF_near_codewords}}.
\vspace{-0.10cm}

\subsection{Small codes}

We first focus on “toy” double-circulant QC-MDPC codes, with $r = 2003$ and $v\in\{ 9, 11, 13, 15 \}$. 
We consider three decoders, all based on the blueprints from  Algorithms \ref{alg:bf_general} and \ref{alg:bf_lookup}. $\mathtt{IterMax}$ denotes the maximum number of iterations.

The first one is the $\bfmax$ \cite{baldelli2025bf} decoder, which flips only one bit in each iteration, choosing the one with the largest counter value.
If there are more counters with the same value, the decoder randomly chooses the bit to be flipped among them.
In Fig.~\ref{fig:bfmax_2003}, we provide the results of the corresponding simulations, using 
$\mathtt{IterMax} = 2t$.
 
The second one is the Majority-Logic Decoder ($\mld$), which considers a fixed threshold in each iteration, equal to $\lceil v/2 \rceil$. Then, all the bits for which the associated counters are greater than $ v/2$ are flipped. 
Fig.~\ref{fig:mld_2003} provides the corresponding results of numerical simulations, using $\mathtt{IterMax} = 50$.

The third decoder is the Out-Of-Place BF ($\bfout$) decoder, which uses a potentially different threshold in each iteration. 
It flips all the bits for which counters are at least as large as the current threshold.
For our simulations, we have set $\mathtt{IterMax} = 15$ and used the threshold values reported in Table \ref{tab:thresholds_r2003}.
These values have been chosen with the following criterion: the first iterations use large thresholds (aiming to guarantee that wrong flips occur with small probability), 
while the subsequent iterations use smaller ones (since, at that stage, we expect that most errors have been corrected, then wrong flips occur with small probability in any case).
This mimics the threshold selection rule of more advanced BF decoders, which are studied next. The corresponding simulation results are presented in Fig.~\ref{fig:bfout_2003}.

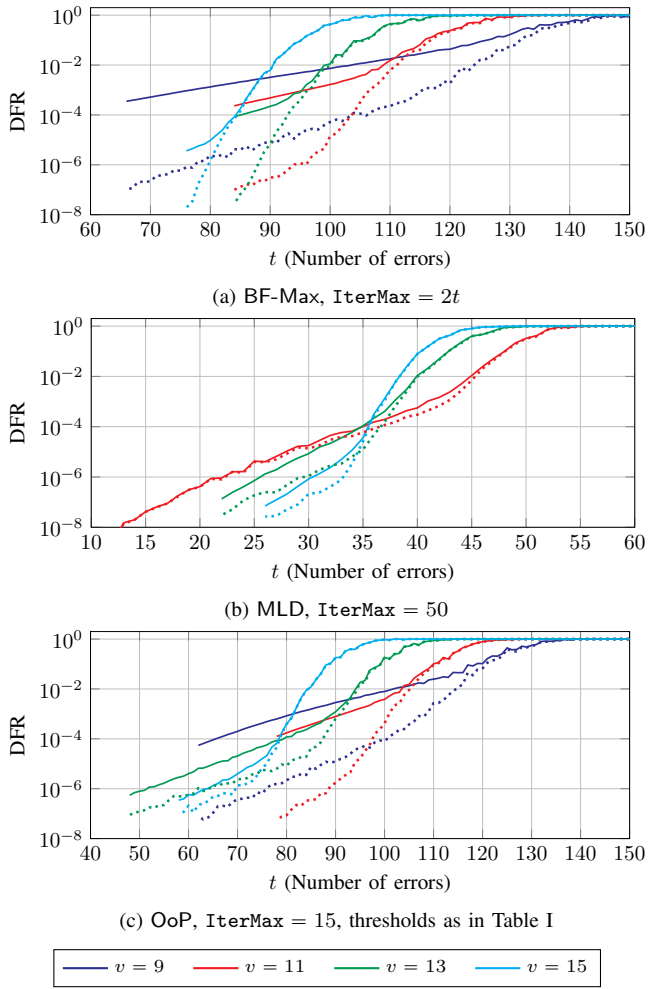
\begin{figure}[h]
    \centering

    \subfloat[$\bfmax$, $\mathtt{IterMax} = 2t$]{%
        \makebox[\linewidth]{%
            \resizebox{\linewidth}{!}{%
                \begin{tikzpicture} 
    \begin{semilogyaxis}[
        xlabel={$t$ (Number of errors)},
        ylabel={ DFR },
        xmin=60, xmax=150,
        ymin=1e-8, ymax=2,
        xtick={60, 70, 80, 90, 100, 110, 120, 130, 140, 150},
        ytick={1, 1e-2, 1e-4, 1e-6, 1e-8},
        legend pos = south east,
        legend style={font=\scriptsize},
        ymajorgrids=true,
        yminorgrids=true,
        xmajorgrids=true,
        height  = 5.0cm,
        minor grid style = {gray!30!white},
        width = 1.2\columnwidth,
        legend entries={$v = 9\hspace{3mm}$, $v = 11\hspace{3mm}$, $v = 13\hspace{3mm}$, $v = 15$},
        legend to name=named,
    ]

\pgfplotstableread{Data/bfmax_2003_9}{\vnine}
\addplot [Blue, thick] table[x=X, y=Y] {\vnine};

\pgfplotstableread{Data/bfmax_2003_11}{\vnine}
\addplot [Red, thick] table[x=X, y=Y] {\vnine};

\pgfplotstableread{Data/bfmax_2003_13}{\vnine}
\addplot [ForestGreen, thick] table[x=X, y=Y] {\vnine};

\pgfplotstableread{Data/bfmax_2003_15}{\vnine}
\addplot [Cyan, thick] table[x=X, y=Y] {\vnine};

\pgfplotstableread{Data/bfmax_2003_9}{\vnine}
\addplot [Blue, dotted, very thick] table[x=X, y=Z] {\vnine};

\pgfplotstableread{Data/bfmax_2003_11}{\vnine}
\addplot [Red, dotted, very thick] table[x=X, y=Z] {\vnine};
     
\pgfplotstableread{Data/bfmax_2003_13}{\vnine}
\addplot [ForestGreen, dotted, very thick] table[x=X, y=Z] {\vnine};
    
\pgfplotstableread{Data/bfmax_2003_15}{\vnine}
\addplot [Cyan, dotted, very thick] table[x=X, y=Z] {\vnine};

\end{semilogyaxis}
\end{tikzpicture}%
            }%
        }%
        \label{fig:bfmax_2003}
    }
    \vspace{+0.00cm}
    \subfloat[$\mld$, $\mathtt{IterMax} = 50$]{%
        \makebox[\linewidth]{%
            \resizebox{\linewidth}{!}{%
                \begin{tikzpicture} 
    \begin{semilogyaxis}[
        xlabel={$t$ (Number of errors)},
        ylabel={ DFR },
        xmin=10, xmax=60,
        ymin=1e-8, ymax=2,
        xtick={5, 10, 15, 20, 25, 30, 35, 40, 45, 50, 55, 60},
        ytick={1, 1e-2, 1e-4, 1e-6, 1e-8},
        legend pos = south east,
        legend style={font=\scriptsize},
        ymajorgrids=true,
        yminorgrids=true,
        xmajorgrids=true,
        height  = 5.0cm,
        minor grid style = {gray!30!white},
        width = 1.2\columnwidth,
        legend columns=-1,
        legend entries={$v = 9\hspace{3mm}$, $v = 11\hspace{3mm}$, $v = 13\hspace{3mm}$, $v = 15$},
        legend to name=named,
    ]

\pgfplotstableread{Data/mld_2003_9}{\vnine}
\addplot [Blue, thick] table[x=X, y=Y] {\vnine};

\pgfplotstableread{Data/mld_2003_11}{\vnine}
\addplot [Red, thick] table[x=X, y=Y] {\vnine};

\pgfplotstableread{Data/mld_2003_13}{\vnine}
\addplot [ForestGreen, thick] table[x=X, y=Y] {\vnine};

\pgfplotstableread{Data/mld_2003_15}{\vnine}
\addplot [Cyan, thick] table[x=X, y=Y] {\vnine};

\pgfplotstableread{Data/mld_2003_9}{\vnine}
\addplot [Blue, dotted, very thick] table[x=X, y=Z] {\vnine};

\pgfplotstableread{Data/mld_2003_11}{\vnine}
\addplot [Red, dotted, very thick] table[x=X, y=Z] {\vnine};
     
\pgfplotstableread{Data/mld_2003_13}{\vnine}
\addplot [ForestGreen, dotted, very thick] table[x=X, y=Z] {\vnine};
    
\pgfplotstableread{Data/mld_2003_15}{\vnine}
\addplot [Cyan, dotted, very thick] table[x=X, y=Z] {\vnine};

\end{semilogyaxis}
\end{tikzpicture}%
            }%
        }%
        \label{fig:mld_2003}
    }
    \vspace{+0.00cm}
    \subfloat[$\bfout$, $\mathtt{IterMax} = 15$, thresholds as in Table~\ref{tab:thresholds_r2003}]{%
        \makebox[\linewidth]{%
            \resizebox{\linewidth}{!}{%
                \begin{tikzpicture} 
    \begin{semilogyaxis}[
        xlabel={$t$ (Number of errors)},
        ylabel={ DFR },
        xmin=40, xmax=150,
        ymin=1e-8, ymax=2,
        xtick={40, 50, 60, 70, 80, 90, 100, 110, 120, 130, 140, 150},
        ytick={1, 1e-2, 1e-4, 1e-6, 1e-8},
        legend pos = south east,
        legend style={font=\scriptsize},
        ymajorgrids=true,
        yminorgrids=true,
        xmajorgrids=true,
        height  = 5.0cm,
        minor grid style = {gray!30!white},
        width = 1.2\columnwidth,
        legend columns=-1,
        legend entries={$v = 9\hspace{3mm}$, $v = 11\hspace{3mm}$, $v = 13\hspace{3mm}$, $v = 15$},
        legend to name=named,
    ]

\pgfplotstableread{Data/out_2003_9}{\vnine}
\addplot [Blue, thick] table[x=X, y=Y] {\vnine};

\pgfplotstableread{Data/out_2003_11}{\vnine}
\addplot [Red, thick] table[x=X, y=Y] {\vnine};

\pgfplotstableread{Data/out_2003_13}{\vnine}
\addplot [ForestGreen, thick] table[x=X, y=Y] {\vnine};

\pgfplotstableread{Data/out_2003_15}{\vnine}
\addplot [Cyan, thick] table[x=X, y=Y] {\vnine};

\pgfplotstableread{Data/out_2003_9}{\vnine}
\addplot [Blue, dotted, very thick] table[x=X, y=Z] {\vnine};

\pgfplotstableread{Data/out_2003_11}{\vnine}
\addplot [Red, dotted, very thick] table[x=X, y=Z] {\vnine};
     
\pgfplotstableread{Data/out_2003_13}{\vnine}
\addplot [ForestGreen, dotted, very thick] table[x=X, y=Z] {\vnine};
    
\pgfplotstableread{Data/out_2003_15}{\vnine}
\addplot [Cyan, dotted, very thick] table[x=X, y=Z] {\vnine};

\end{semilogyaxis}
\end{tikzpicture}%
            }%
        }%
        \label{fig:bfout_2003}
    }
    \vspace{+0.20cm}
    \ref{named}
    \vspace{-0.10cm}
    \caption{\ac{DFR} of double-circulant codes ($r = 2003$), for several decoders and values of $v$. 
    Continuous and dotted lines refer to the standard and the TS-aware decoder, respectively.
    Each point has been estimated after (at least) $30$ decoding failures.}
    \label{fig:r2003}
\end{figure}


\begin{table}[h!]
\centering
\caption{$\bfout$ decoder thresholds used in Fig.~\ref{fig:bfout_2003}.}
\label{tab:thresholds_r2003}
\begin{tabular}{ccc}
\toprule
$v$ & & Thresholds\\\midrule
$9$ & & $(9, 8, 8, 8, 8, 7, 7, 7, 6, 6, 6, 6, 5, 5, 5)$\\
$11$ & & $(11, 10, 10, 9, 9, 8, 8, 8, 7, 7, 7, 7, 6, 6, 6)$\\
$13$ & & $(13, 12, 12, 11, 11, 10, 10, 9, 9, 8, 8, 8, 7, 7, 7)$\\
$15$ & & $(15, 14, 14, 13, 13, 12, 12, 11, 11, 10, 10, 9, 9, 8, 8)$\\\bottomrule
\end{tabular}
\end{table}

As it results from Fig.~\ref{fig:r2003}, $\bfmax$ is the decoder for which the benefits of the enhancement we propose are more evident. 
This is because, by design, this decoder flips only one bit in each iteration. 
So, whenever it encounters a residual error vector which consists in a $(u,u)$-almost near-codeword with $u > v/2$, it is very likely to converge to the close near-codeword. 
We also note that, in the waterfall region, the impact is relatively limited, since almost near-codewords are not expected to be the dominant source of decoding failures there.
A similar reasoning can be applied to the other two decoders: $\mld$ and $\bfout$. 
Indeed, also for them
the effect of the modification becomes more significant in the error-floor region.
Moreover, the fact that 
the reduction in terms of DFR appears less important is not surprising, since, for such two decoders, we do not control the number of flipped bits.
That is, there may occur iterations where no bits are flipped, as well as iterations where bits other than bad bits are flipped.
So, they do not gradually converge to a near-codeword, while $\bfmax$ is somewhat forced to do so.

\subsection{BIKE codes}

Let us we focus on the codes and decoders that are used in the BIKE cryptosystem.
We consider the NIST security category 1 parameters, that is, $r = 12323$, $v = 71$, $t = 134$, and the following three decoders: the already analyzed $\bfmax$ decoder with $2t$ iterations, the $\bgf$ decoder \cite{drucker2020qc} (which has been used in the first three rounds of the NIST competition), and the $\newbike$ decoder, described in version $5.2$ of the BIKE specification.
Modulo some minor differences regarding $\bgf$, both $\bgf$ and $\newbike$ are examples of a $\bfout$ decoder (decoding thresholds are chosen as a function of the iteration number and syndrome weight).

For these decoders, we simulate only decoding of $(t,u)$-almost near-codewords, for several values of $u$.
The results are shown in Fig.~\ref{fig:floor_bike}. 
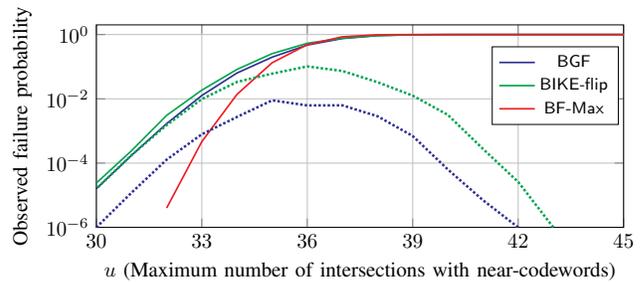
\begin{figure}[t]
    \centering
    \resizebox{\columnwidth}{!}{
\begin{tikzpicture} 
    \begin{semilogyaxis}[
        xlabel={$u$ (Maximum number of intersections with near-codewords)},
        ylabel={Observed failure probability},
        xmin=30, xmax=45,
        ymin=1e-6, ymax=2,
        xtick={30, 33, 36, 39, 42, 45},
        ytick={1, 1e-2, 1e-4, 1e-6, 1e-8},
        legend style={at={(0.75,0.7)},anchor=west, font=\footnotesize},
         ymajorgrids=true,
        yminorgrids=true,
        xmajorgrids=true,
        height  = 5.0cm,
        minor grid style = {gray!30!white},
        width = 1.2\columnwidth
    ]

\addplot [Blue, thick] coordinates{
(45, 1.000000e+00)
(44, 1.000000e+00)
(43, 1.000000e+00)
(42, 9.999980e-01)
(41, 9.999220e-01)
(40, 9.986562e-01)
(39, 9.865659e-01)
(38, 9.277264e-01)
(37, 7.531328e-01)
(36, 4.715155e-01)
(35, 1.974894e-01)
(34, 6.345852e-02)
(33, 1.268148e-02)
(32, 1.694530e-03)
(31, 1.740000e-04)
(30, 1.600000e-05)
(29, 0.000000e+00)

};
\addlegendentry{$\bgf$};

\addplot [Green, thick] coordinates{
(45, 1.000000e+00)
(44, 1.000000e+00)
(43, 1.000000e+00)
(42, 9.999720e-01)
(41, 9.996687e-01)
(40, 9.972301e-01)
(39, 9.836581e-01)
(38, 9.300411e-01)
(37, 7.706093e-01)
(36, 5.348259e-01)
(35, 2.545455e-01)
(34, 8.207705e-02)
(33, 1.860921e-02)
(32, 3.056001e-03)
(31, 2.427292e-04)
(30, 2.400000e-05)
(29, 0.000000e+00)
};
\addlegendentry{$\newbike$};

\addplot [Red, thick] coordinates{
(45, 1.000000e+00)
(44, 1.000000e+00)
(43, 1.000000e+00)
(42, 1.000000e+00)
(41, 1.000000e+00)
(40, 9.999960e-01)
(39, 9.994580e-01)
(38, 9.838940e-01)
(37, 8.515110e-01)
(36, 4.846410e-01)
(35, 1.324380e-01)
(34, 1.375400e-02)
(33, 4.640000e-04)
(32, 4.000000e-06)
};
\addlegendentry{$\bfmax$};
\addplot [Blue, very thick, densely dotted] coordinates{
(45, 0.000000e+00)
(44, 0.000000e+00)
(43, 0.000000e+00)
(42, 1.000000e-06)
(41, 7.000000e-06)
(40, 6.338843e-05)
(39, 6.960637e-04)
(38, 2.868000e-03)
(37, 6.265664e-03)
(36, 6.243172e-03)
(35, 8.966600e-03)
(34, 2.726467e-03)
(33, 7.901235e-04)
(32, 1.281308e-04)
(31, 1.200000e-05)
(30, 1.000000e-06)
(29, 0.000000e+00)
};

\addplot [Green, very thick, densely dotted] coordinates{
(45, 0.000000e+00)
(44, 0.000000e+00)
(43, 1.000000e-06)
(42, 2.600000e-05)
(41, 2.760753e-04)
(40, 3.165559e-03)
(39, 1.257071e-02)
(38, 3.292181e-02)
(37, 7.347670e-02)
(36, 1.019901e-01)
(35, 6.060606e-02)
(34, 3.350084e-02)
(33, 9.794319e-03)
(32, 1.528001e-03)
(31, 1.765303e-04)
(30, 1.600000e-05)
(29, 0.000000e+00)
};

\addplot [Red, very thick, densely dotted] coordinates{
(45, 0.000000e+00)
(44, 0.000000e+00)
(29, 0.000000e+00)
};

\end{semilogyaxis}
\end{tikzpicture}
    }
    \caption{Decoding failure probabilities on $(t,u)$-almost near-codewords for codes with $r = 12323$, $v = 71$ and $t = 134$. Continuous lines are referred to standard decoders, dotted lines to modified decoders. For each value of $u$, decoding of $10^8$ random almost near-codewords has been simulated.}
    \label{fig:floor_bike}
\end{figure}
Once again, $\bfmax$ seems to be the decoder with the lowest failure probability.
This is already true if we do not consider the improvement achievable with our approach, while, if we consider it, 
this becomes even more evident.
For all considered decoders, our approach allows reducing the failure probability significantly.
This is especially true for large values of $u$, for which decoders are more likely to converge to a near-codeword.
Remarkably,
for the modified $\bfmax$, we have never seen a decoding failure.
\vspace{-0.25cm}

\section{Conclusion}
\label{sec:concl}
We have introduced a technique to improve BF decoding of QC-MDPC codes leveraging knowledge of near-codewords, which can be applied to any BF decoder and comes with a very mild computational overhead.
With intensive numerical simulations, considering both toy codes and BIKE QC-MDPC codes, we have shown that the DFR can be reduced significantly, especially in the error floor region.
We have also shown that $\bfmax$, a recently proposed decoder, significantly outperforms other decoders, even those used by BIKE within the NIST competition.

\printbibliography

\end{document}